\newcommand{\Lam}{\Lambda}
\newcommand{\logit}{\text{logit}}
\theoremstyle{plain}
\newtheorem{theorem}{Theorem}[section]
\theoremstyle{definition}
\theoremstyle{remark}
\begin{document}

\begin{frontmatter}
\title{A Dependent Feature Allocation Model Based on Random Fields}
\runtitle{A dependent feature allocation model}

\begin{aug}
\author[A]{\fnms{Bernardo}~\snm{Flores}\ead[label=e1]{bernardoflo@utexas.edu}\orcid{0009-0003-7588-0663}},
\author[A]{\fnms{Yang}~\snm{Ni}\ead[label=e2]{yang.ni@austin.utexas.edu}\orcid{0000-0003-0636-2363}},
\author[B]{\fnms{Yanxun}~\snm{Xu}\ead[label=e3]{yanxun.xu@jhu.edu}\orcid{0000-0001-5554-8637}}
\and
\author[A]{\fnms{Peter}~\snm{Müller}\ead[label=e4]{pmueller@math.utexas.edu}\orcid{0000-0002-2948-1229}}
\address[A]{Department of Statistics and Data Sciences,
University of Texas at Austin\printead[presep={,\ }]{e1,e2,e4}.}
\address[B]{Department of Applied Mathematics and Statistics,
John Hopkins University\printead[presep={,\ }]{e3}}
\runauthor{Flores, B. et al.}
\end{aug}

\begin{abstract}
We introduce a flexible framework for modeling dependent feature
allocations. Our approach addresses limitations in traditional
nonparametric methods by directly modeling the logit--probability
surface of the feature paintbox, enabling the explicit incorporation
of covariates and complex but tractable dependence structures.  

The core of our model is a  Gaussian Markov Random Field (GMRF),
which we use to robustly decompose the latent field, separating a 
structural component based on the baseline covariates from intrinsic,
unstructured heterogeneity. This structure is not a rigid grid but a
sparse k--nearest neighbors graph derived from the latent
geometry  in the data, 
ensuring high--dimensional tractability. We extend this
framework to a dynamic spatio--temporal process, allowing item effects
to evolve via an Ornstein--Uhlenbeck process. Feature correlations are
captured using a low-rank factorization of their joint prior. We
demonstrate our model's utility by applying it to a polypharmacy
dataset, successfully inferring latent health conditions from patient
drug profiles. 
\end{abstract}

\begin{keyword}[class=MSC]
\kwd[Primary ]{62F15}
\kwd[; secondary ]{62P10}
\end{keyword}

\begin{keyword}
\kwd{Bayesian nonparametrics}
\kwd{feature allocation}
\kwd{paintbox}
\kwd{polypharmacy}
\end{keyword}

\end{frontmatter}
\section{Introduction}

We develop a novel feature allocation (FA) model for dependent
data. Instead of relying on 
a latent random measure, we directly model the
probability surface of the feature paintbox. Our method embeds the
feature allocation onto the unit square, $[0,1]^2$, transforming the
discrete feature assignment problem into the estimation of a
continuous, spatially-varying random field $\Lambda$. We model 
 $\Lambda$ using a hierarchical Bayesian spatial model that flexibly
captures dependencies along multiple axes. 

The proposed approach 
is designed to address key challenges within the broader
context of exchangeable feature allocations.
FA models provide a powerful framework for discovering latent
subsets (or properties) in data, with applications ranging from topic modeling to bioinformatics. Standard methods, often rooted in non-parametric
Bayesian approaches, such as the Indian Buffet Process
\citep{thibaux_beta}, model feature assignments implicitly.
While theoretically elegant, they face two major limitations: computational scalability for large datasets and, crucially,
lack of flexibility in explicitly modeling dependencies based on covariates like time, spatial proximity, or any other characteristic of the observations. Incorporating such structures often requires
complex, bespoke model extensions that are computationally burdensome
and difficult to generalize.

A key innovation of our model is the decomposition of observation--level
effects into a structured component, governed by a
conditional autoregressive (CAR) model, and an unstructured component
that captures independent heterogeneity. This formulation, known as 
Besag, York, and Mollié (BYM) model \citep{besag1991bym}, provides a
robust and interpretable framework for separating structured correlation
from random noise. This modular structure makes it straightforward to
introduce additional factors, such as dependencies between features
(e.g., medical conditions in our application) or temporal dynamics.

We demonstrate the utility and scalability of this approach with an application to polypharmacy in clinical data. Polypharmacy, the simultaneous use of multiple medications, presents a significant statistical challenge: a patient's prescription list is often a noisy proxy for their true health status due to symptom masking, where a drug prescribed for one condition inadvertently treats the symptoms of a separate, undiagnosed comorbidity. This disconnect makes it difficult to observe the true feature allocation directly. Our model addresses this by inferring the latent comorbidity profile from observed drug prescriptions, utilizing the shared structure among similar patients to unmask hidden conditions.

\section{Feature Paintbox and Feature--Frequency Models}
FA models are probability models for random
subsets of a set of observations (items). Subsets are
interpreted as features (or properties) of the items. 
FA generalizes clustering,
which restricts the random subsets to a partition. 
In contrast, in an FA, an item is not limited to one subset but can possess any number of features, including, no feature at all. 
For instance, in topic modeling, a document (an observation) can be linked to several different topics (features);
in genomics, a patient (an observation) can have multiple
co--occurring genetic markers (features). The goal of these models is
to infer this latent feature structure from the observed data, often
when the total number of features is unknown. 

Generatively, feature allocation models describe a stochastic process
by which each item acquires its set of features. These models
are often formulated within a Bayesian non--parametric framework, such
as the Indian Buffet Process (IBP) \citep{thibaux_beta}.
It can be defined 
as a prior over binary matrices (which encode feature membership)
with rows corresponding to items and columns to features,
allowing for a potentially infinite number of columns.
This allows the number of latent features in the data to be
inferred rather than specified in advance. A key property of many such
models is exchangeability, meaning the probability of an allocation is
invariant to the ordering of the items or features. 
This property is guaranteed by an extension of de Finetti's theorem,
which implies the existence of an underlying measure that
governs the generative process.

Exchangeable feature allocations have a representation that
generalizes Kingman's paintbox \citep{kingman1978representation} for random partitions,
which associates the clusters with mutually non-overlapping intervals
that partition the unit interval $[0,1]$.  The length of each interval
represents the prevalence of the corresponding cluster. In contrast,
Kingman's paintbox model for an exchangeable FA associates each
latent feature $c$ with a measurable, possibly disconnected subset $C_c
\subseteq [0,1]$, without the restriction
to mutually exclusive subsets and without restricting the union to
cover the entire unit interval.  
One may then generate a realization of the FA
by first drawing, for each observation $i$ an independent
uniform random variable $v_i$ on $[0,1]$ and assigning
$i$ to feature $c$ if $v_i \in C_c$ 
(see Figure \ref{fig:paintbox}). In this way, the law of
the collection $\{C_c\}_{c\ge1}$, known as the feature paintbox,
serves as the de Finetti mixing measure for the feature allocation. An
allocation admits an \emph{exchangeable feature probability function}
(EFPF) if and only if its distribution can be expressed solely in
terms of the sizes of the feature subsets, mirroring the role of the
exchangeable partition probability function (EPPF) in clustering
\citep{broderick2013feature}.

\begin{figure}[!ht]
    \centering
    \includegraphics[width=0.8\linewidth]{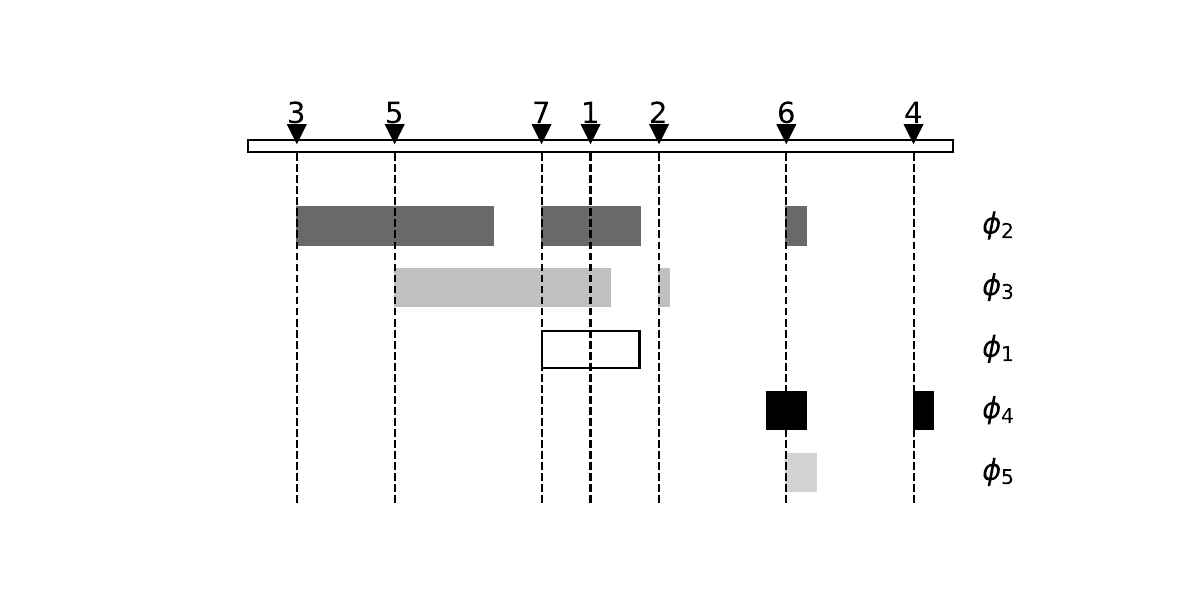}
    \caption{Visual representation of a feature paintbox. Image recreated from \cite{tamaratraits}.}
    \label{fig:paintbox}
\end{figure}

A particularly tractable subclass of feature allocation models is
provided by feature-frequency models. In these models, each
feature $c$ is assumed to have an associated random frequency $p_c$ and a label
$\phi_c$ drawn independently from a continuous distribution (typically
Uniform$(0,1)$). The generative process can be conceptualized in two
equivalent ways. For each item, feature $c$ can be seen as
included via an independent Bernoulli trial with probability
$p_c$.
The feature frequencies $p_c$ are encoded in a 
random measure
\[
B = \sum_{c=1}^\infty p_c\,\delta_{\phi_c},
\]
with the condition that $\sum_{c=1}^\infty p_c < \infty$ almost
surely, which ensures that each item exhibits only finitely
many features.

An alternative representation is provided by the feature paintbox, constructed via a recursive fractal slicing of the unit interval $[0,1]$. Let $\mathcal{A}_{c-1}$ denote the partition generated by the first $c-1$ features, where each atom $A \in \mathcal{A}_{c-1}$ represents a unique feature history defined by the intersection $A = \bigcap_{k=1}^{c-1} C_k^{\delta_k}$, with $C_k^1 = C_k$ and $C_k^0 = [0,1] \setminus C_k$ denoting the set and its complement respectively. To ensure independence with marginal probabilities $p_c$, the set $C_c$ is defined as the union of the leading $p_c$-proportion of each atom:
\[
C_c = \bigcup_{A \in \mathcal{A}_{c-1}} \operatorname{interval}(A, p_c).
\]
For each item $i$, a uniform variable $U_i \sim \operatorname{Uniform}[0,1]$ determines feature membership via $U_i \in C_c$. This construction explicitly realizes the independent Bernoulli trials, guaranteeing that the allocation distribution depends solely on $\{p_c\}$ while remaining exchangeable. Moreover, this formulation facilitates tractable inference, as integrating over the random measure $B$ yields closed-form predictive distributions and elucidates the correlation structure among latent features \citep{broderick2013feature,teh2009indian}.

\section{A dependent feature allocation model}
\subsection{Dependent features}
The feature frequency models described in Section 2 provide a
generative framework for exchangeable feature allocations. In a finite
setting with $M$ features, the model is fully specified by a vector of
feature probabilities, $p = (p_1, \dots, p_M)$, where $p_c$ is the
frequency of feature $c$. For any observation or item $i$, the feature
indicators $A_{ic}$ are drawn independently from a Bernoulli
distribution:
$$
A_{ic} \mid p_c \overset{\text{ind}}{\sim} \text{Bernoulli}(p_c),
\quad \text{for } c=1, \dots, M.
$$
In commonly used models, such as the IBP, one assumes that
the feature probabilities $p_c$
are drawn independently from a prior distribution, such as a Beta
distribution. This assumption of a priori independent
 feature frequencies $p_c$  
prevents the model from capturing known or suspected correlations
between features. For instance, in a medical context, the latent
probability of having diabetes should be correlated with the
probability of having hypertension.

To overcome this limitation, we define a parametric, dependent feature
frequency model. The core idea is to place a joint prior distribution
on the entire vector of feature probabilities
that explicitly encodes a dependency structure
(discussed in the next section). 
As probabilities are constrained to the
$[0,1]$ interval, we work on the logit scale. Let $\Lambda_c =
\text{logit}(p_c)$ be the latent logit-probability for feature $c$. We
then place a prior on the vector $\Lambda = (\Lambda_1, \dots,
\Lambda_M)$ that encourages correlation among related features. 

For example, one could 
impose a Conditional Autoregressive (CAR) model as a
prior on the feature paintbox, as represented by $\Lambda$. A CAR
model is a type of Gaussian Markov Random Field (GMRF) defined by a
neighborhood graph, $\mathcal{G}_F = (\mathcal{V}_F, \mathcal{E}_F)$,
where the vertices $\mathcal{V}_F = \{1, \dots, M\}$ represent the
features. An edge between features $c$ and $d$ indicates an a priori
belief that their latent probabilities are related. The CAR prior
specifies that the conditional distribution of any $\Lambda_c$, given
all other logit-probabilities, depends only on its neighbors
$\Lambda_{\partial_c}$: 
\begin{equation}\label{CAR}
p(\Lambda_c \mid \Lambda_{-c}) = p(\Lambda_c \mid
\Lambda_{\partial_c}) = N\left( \sum_{d \in \partial_c} w_{cd}
\Lambda_d, \tau_c^2 \right),
\end{equation}
where $w_{cd}$ are weights defining the strength of interaction and
$\tau_c^2$ is a conditional variance. The joint distribution for this
prior is a multivariate normal with a sparse precision matrix
determined by the graph $\mathcal{G}_F$. Conditions on the matrix of weights $w_{cd}$ forcing symmetry of the covariance matrix of the random surface ensure that
\eqref{CAR} defines a valid joint probability model on $\Lambda$
\citep{ver2018relationship}.
This approach allows for a construction of dependent features, 
which is essential when, for example, features define related medical conditions. In the upcoming discussion, we will add one more level of
generalization to also allow for the representation of dependence
across observations. 

The approach stands in contrast to conventional finite feature
models, such as the finite Indian Buffet Process. In the IBP
framework, inter-feature dependencies are induced implicitly through a
"rich-get-richer" sequential generative process, where the probability
of a new item acquiring a feature depends on its existing
popularity. While this induces a dependency, the structure is a
byproduct of the process and offers little direct control. Our
parametric formulation, by contrast, provides explicit control over
inter-feature dependencies. By placing a GMRF prior on the
logit-probabilities, the correlation structure is encoded directly
into the precision matrix via the neighborhood graph
$\mathcal{G}_F$. This allows a researcher to inject domain
knowledge—for instance, specifying that two medical conditions are
likely to be comorbid by creating an edge between them in the graph—in
a manner that is not straightforward in the IBP framework. This
explicit parameterization provides a more interpretable and flexible
tool for modeling complex systems where prior structural information
is available. 

\subsection{
  Dependent items -- 
  a standarized representation of the surface \texorpdfstring{$\Lambda$}{Lambda}}  

The parametric model introduced above can capture
dependencies between features, but still leaves the feature allocation of items to features
unspecified. 
 Independent sampling with the feature probabilities $p_c$ would
limit the ability to model scenarios where 
items themselves have a dependence structure. For example, 
patients could be related by demographic similarity, or items could be ordered in time. 

To incorporate both feature-level and item-level dependencies
simultaneously, we extend the parametric approach
 by augmenting $\Lam_c$ to $\Lam_{c,i} = \logit p_{ci}$ for the
probability of including item $i$ in feature $c$.
That is, we are 
embedding the entire feature allocation into the unit
square, $[0,1]^2$.
This transformation reframes the problem from modeling a single vector
of feature parameters to modeling a probability surface or random
field, making it directly amenable to spatial statistical models that
can capture dependencies along multiple axes. 

Inspired by spatial embedding methods for graph representations
\citep{borgsgraphons}, we propose a technique for embedding feature
data into the unit square. Consider a data set comprising $N$
items of $M$ features, $(A_{i1}, \dots, A_{iM})_{i=1}^N$.
\ In the construction of the desired embedding, we use the
vertical y-axis of the unit square to correspond to the features, and
the horizontal x-axis to represent the observations. 
We first partition the y-axis (representing
features) into $M$ disjoint intervals, defining $M$ horizontal strips
$S_1, \dots, S_M$, where $S_c$ uniquely corresponds to
feature $c$. Next, we partition the x-axis (representing the
item index
) into $N$ intervals,
defining $N$ vertical strips $S'_1, \dots, S'_N$. See Figure~\ref{fig:paintemb}. The ordering of
these strips can be informed by prior knowledge, such as patient
covariates or timestamps. 
We will argue that the use of equally spaced grids on vertical and
horizontal axis can be done without loss of generality.

The intersection of these strips defines a grid of $M \times N$
rectangular regions (pixels) within $[0,1]^2$. We then define a
function $\Lambda: [0,1]^2 \to \mathbb{R}$, which represents the
logit-probability surface of the feature allocation. For any point
$(x, y) \in [0,1]^2\in S_c \cap S'_i$, we
model $\Lambda(x, y) = \Lambda_{ic}$, the logit-probability that
feature $c$ is present in item $i$. This yields a piecewise
constant surface that represents the complete feature allocation
probabilities. Conditioned on this surface $\Lambda$, the feature
assignments $A_{ic}$ are independent Bernoulli trials. This embedding
is also dynamic; as new items become available, corresponding
vertical strips can be added to extend the representation. An example
of such an embedded feature paintbox is shown in
Figure~\ref{fig:paintemb}. 

\begin{figure}[h]
    \centering
    \includegraphics[width=0.7\linewidth]{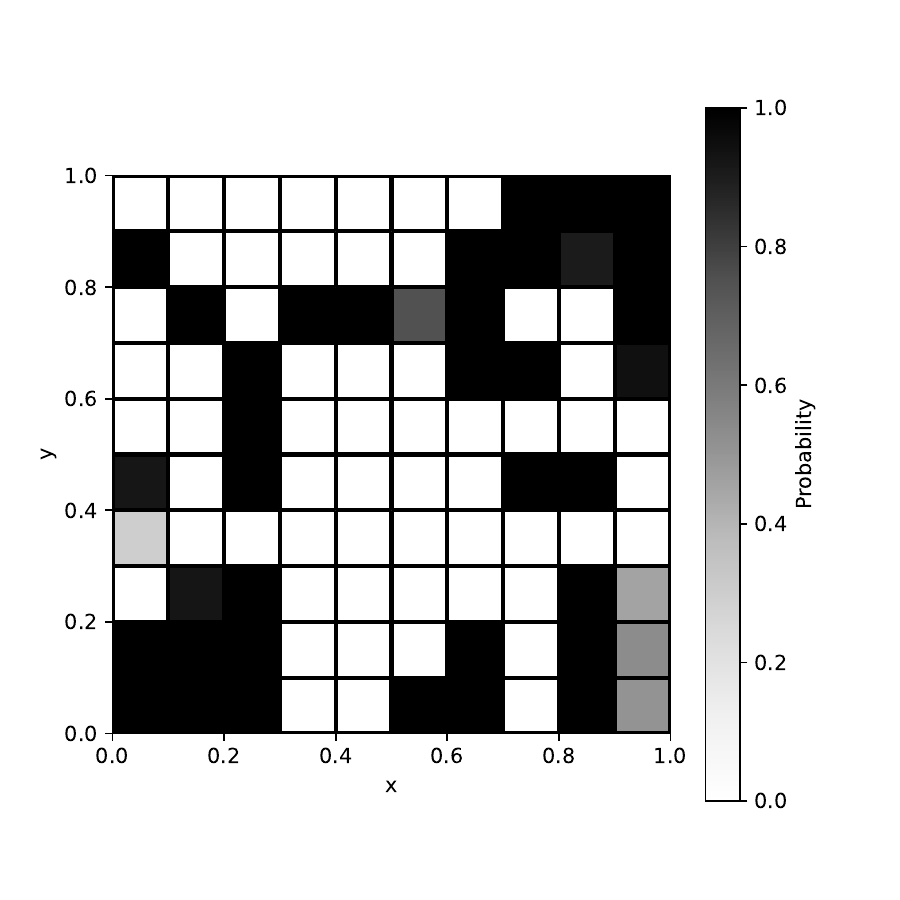}
    \caption{Embedding of a feature paintbox onto a probability surface $\Lambda$ in $[0,1]^2$.}
    \label{fig:paintemb}
\end{figure}

This representation of the feature allocation can be viewed as a
random field on a discrete grid. We define an undirected graph
$\mathbf{G}=\{\mathbf{V,\mathbf{E}}\}$ to represent the assumed
dependencies between the pixels in this spatial embedding. This
graphical representation will be used to define a spatial model for
the feature allocations. 

\section{Parametrization of the random surface}
\subsection{An interpretable hierarchical prior}
We model the latent logit--probability surface, $\Lambda$, using a hierarchical Bayesian framework that decomposes the field into item-- and feature--specific components. The logit--probability for item $i$ having feature $c$ is modeled as:
\[
\text{logit}(p_{ic}) = \Lambda_{ic} = \phi_i + \delta_c
\]
Here, $\phi_i$ is a latent effect for each item that captures
their baseline propensity and similarity to other items (e.g., patients), and
$\delta_c$ is an individual effect for each feature representing its
overall prevalence. This separates the column from the row
effects, and is a natural generalization of the Beta--Bernoulli
process, which assumes only feature--specific probabilities and iid
sampling for each observation; instead, we allow for more structure by also
specifying observation--specific probabilities. To model these observation--level
dependencies (rows), we use a Besag, York, and Mollié (BYM) model for
the patient effects $\phi$ \citep{besag1991bym}. This approach
robustly separates spatially structured correlation from unstructured
heterogeneity by defining $\phi_i$ as the sum of a structured spatial
component, $u_i$, and an unstructured random effect, $v_i$. 
\[
\phi_i = u_i + v_i
\]
The structured component, $u$, is given an Intrinsic Conditional
Autoregressive (ICAR) prior \citep{besagpseudo}. The ICAR prior is a
type of GMRF defined on an observation neighborhood graph
$\mathcal{N}_p$ (e.g., from k--NN), and it favors
neighboring observations to be similar according to their baseline
covariates. Its precision matrix is given by the graph Laplacian, $Q_u
= \tau_s (D_p - W_p)$, where $W_p$ is the adjacency matrix of the
observation graph and $D_p$ is the diagonal degree matrix. This prior
is improper, as it only defines relative differences between
observations. 

The unstructured component, $v$, consists of independent and identically distributed Gaussian random effects, $v_i \sim \mathcal{N}(0, 1/\tau_u)$, which capture observation--specific variation not explained by the spatial structure.

The combination of these two components results in a proper prior for the overall observation effect $\phi$, with a full-rank precision matrix $Q_\phi = \tau_s(D_p - W_p) + \tau_u I$, making it 
\begin{equation}
    \phi \sim N\left\{0,\, [\tau_s(D_p - W_p) + \tau_u I]^{-1}\right\}.
\end{equation}

We denote this distribution as BYM$(\tau_s, \tau_u)$. This specification allows the model to learn both the strength of the spatial smoothing ($\tau_s$) and the amount of independent observation heterogeneity ($\tau_u$) from the data. The feature effects $\delta$ are given a Gaussian prior, $\delta_c \sim \mathcal{N}(0, \Sigma_\delta)$. The covariance matrix $\Sigma_\delta$ can also be parameterized to capture a shared correlation structure, for instance, using a low--rank representation based on latent factors or a kernel derived from feature similarity metrics. This allows the model to learn feature co-occurrence patterns while maintaining computational efficiency. On the rest of this work we will assume a precision matrix $\Sigma_\delta^{-1}=Q_\delta=(\sigma_\delta^2)^{-1} I_C$.

\begin{theorem}
    The BYM distribution is well--specified, that is, the full conditional distributions generate a valid joint.
\end{theorem}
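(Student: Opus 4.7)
The plan is to show the BYM joint is a valid, proper multivariate Gaussian by verifying that its implied precision matrix is symmetric and strictly positive definite; by Brook's lemma, such a precision matrix uniquely determines a joint distribution consistent with the specified full conditionals.

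First, I would identify the joint precision. Starting from the hierarchy $\phi_i = u_i + v_i$ with an ICAR prior on $u$ (improper density proportional to $\exp(-\tfrac{\tau_s}{2} u^\top L u)$ where $L = D_p - W_p$) and independent $v_i \sim N(0, 1/\tau_u)$, I would assemble the quadratic form for $\phi$---either by writing the joint of $(u, v)$, changing variables to $(\phi, v)$ and marginalizing $v$, or by directly verifying that the BYM full conditionals of $\phi$ agree with those of a GMRF with precision $Q_\phi = \tau_s L + \tau_u I$.

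Second, I would verify that $Q_\phi$ is a valid precision. Symmetry is immediate because both $L$ (from symmetry of $W_p$ and diagonality of $D_p$) and $\tau_u I$ are symmetric. For positive definiteness, the standard graph-Laplacian identity $x^\top L x = \sum_{i \sim j} (x_i - x_j)^2 \geq 0$ shows $L$ is PSD with null space spanned by the constant vector, while $x^\top (\tau_u I) x = \tau_u \|x\|^2 > 0$ for $x \neq 0$ whenever $\tau_u > 0$. Hence $x^\top Q_\phi x > 0$ for every $x \neq 0$, so $Q_\phi$ is strictly positive definite and $N(0, Q_\phi^{-1})$ is a proper multivariate Gaussian. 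Its GMRF full conditionals $p(\phi_i \mid \phi_{-i}) = N\bigl(-(Q_\phi)_{ii}^{-1} \sum_{j \neq i} (Q_\phi)_{ij} \phi_j,\, (Q_\phi)_{ii}^{-1}\bigr)$ match the BYM specification, and Brook's lemma certifies this joint as the unique distribution compatible with the specified conditionals.

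The main obstacle is reconciling the improperness of the ICAR component with the desired properness of the joint on $\phi$. The resolution is that the unstructured term $\tau_u I$ supplies exactly the anchoring needed to lift the rank-deficient Laplacian to a full-rank precision matrix, so strict positive definiteness of $Q_\phi$ follows from a simple additivity argument on quadratic forms. Once $Q_\phi$ is certified strictly PD, the remainder of the argument is essentially mechanical.
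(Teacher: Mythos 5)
Your proposal is correct and follows essentially the same route as the paper: write the BYM precision as $Q_\phi = \tau_s L_p + \tau_u I$, note symmetry, and obtain strict positive definiteness from the sum of the positive semi--definite graph Laplacian and the strictly positive definite identity term, then appeal to the standard GMRF consistency result (you cite Brook's lemma where the paper cites Hammersley--Clifford; these serve the same purpose here). The only cosmetic difference is that the paper also folds the independent block $Q_\delta$ into a block--diagonal joint precision for $(\phi,\delta)$, which adds nothing beyond what you prove for $Q_\phi$.
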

\begin{proof}
To show this, we need the joint precision matrix for all latent parameters $Q_{\theta}$ to be symmetric and strictly positive definite, as this guarantees a valid, proper Gaussian joint distribution. Let the full parameter vector be $\theta = [\phi, \delta]$. As $\phi$ and $\delta$ are \emph{a priori} independent, the joint precision matrix $Q_{\theta}$ is block--diagonal, $Q_{\theta} = \text{diag}(Q_\phi, Q_\delta)$. This matrix is symmetric because its diagonal blocks are symmetric ---  $Q_\phi = \tau_s L_p + \tau_u I_I$ is symmetric as it is the sum of the symmetric graph Laplacian $L_p$ and the symmetric identity matrix $I_I$, and $Q_\delta = (\sigma_\delta^2)^{-1} I_C$ is diagonal and thus symmetric. 

The matrix $Q_{\theta}$ is also strictly positive definite because its diagonal blocks are strictly positive definite (assuming $\tau_s, \tau_u, \sigma_\delta^2 > 0$) --- $Q_\delta$ is strictly positive definite as its diagonal entries are positive, and $Q_\phi$ is strictly positive definite because it is the sum of a positive semi--definite matrix ($\tau_s L_p$) and a strictly positive definite matrix ($\tau_u I_I$). Since $Q_{\theta}$ is symmetric and strictly positive definite, it has a valid inverse $K_{\theta} = Q_{\theta}^{-1}$, and the joint distribution $p(\vec{\theta}) \sim \mathcal{N}(0, K_{\theta})$ is a proper, well--specified Gaussian. By the Hammersley--Clifford theorem, this joint is consistent with the local conditional specifications.
\end{proof}


The parameters of the BYM model provide an interpretable decomposition of the variance in the latent field.
\paragraph{Covariate--dependent Parameters ($\tau_s, \tau_u$)} These
precision parameters control the observation--level effects.
The first parameter, $\tau_s$,
governs the strength of the spatial smoothing; a large value indicates
that a observation's effect is strongly determined by their
neighbors, which are determined given the baseline covariates, \emph{i.e.}, two items are close if their measured variables are close.
The second parameter, $\tau_u$,  controls the variance of the independent
observation--specific noise. The relative magnitude of these two
parameters is often summarized as a mixing proportion, indicating the
fraction of the total observation--level variance that is attributable
to the structured spatial component. 

\paragraph{Feature Effects ($\delta_c$)} This vector of parameters represents the baseline log-odds for each feature, averaged over all observations. A large positive $\delta_c$ indicates a feature with a high overall prevalence, while a large negative value indicates a rare feature. By examining the posterior distributions of these effects, we can rank features by their prevalence and identify those with significant positive or negative baseline risk. Note that these parameters are not identifiable in the sense that adding any constant $C$ to all of them leaves the likelihood invariant; however we can still analyze the relative differences between them.

The full hierarchical model is specified as:
\begin{align*}
    A_{ic} \mid \Lambda_{ic} &\sim \text{Bernoulli}(\text{logit}^{-1}(\Lambda_{ic})) \\
    \Lambda_{ic} &= \phi_i + \delta_c \\
    \phi &\sim \text{BYM}(\tau_s,\tau_u) \\
    \delta_c &\sim \mathcal{N}(0, \Sigma_\delta)
\end{align*}

\subsection{Adding temporal dynamics}\label{sub:temp}
The separable nature of the hierarchical model allows for
straightforward extensions. For instance, in a time--dynamic setting,
we can introduce a temporal random effect or model the evolution of
condition effects $\delta_c$ or item effects $\phi_i$
over time using a process like AR(1)
or an Ornstein--Uhlenbeck process, allowing the baseline prevalence of
conditions or items to change dynamically.
Keeping in mind the motivating application to polypharmacy data,
we will first focus on the dynamic extension of $\phi_i$ only.
Dynamic extension on $\phi_i$ will allow inference with longitudinal
data, as it is common in biomedical inference.

To extend the model to handle longitudinal data where observations are
recorded over multiple time points, we introduce a time dimension to
the latent field and model the evolution of  item (patient)-specific
effects over time. The observed binary outcome, the feature allocation, for item $i$ regarding
feature $c$ at time $t$ is denoted $A_{itc}$. The logit--probability
is modeled as: 
$$
\text{logit}(p_{itc}) = \Lambda_{itc} = \phi_{it} + \delta_c
$$
Here, the feature effects $\delta_c$ are modeled as time--invariant,
capturing the stable baseline prevalence or importance of each feature
across time. 

The item effects $\phi_{it}$ are now time--varying. At the initial time point ($t=0$), the vector of observation effects $\phi_0$ are given the standard BYM prior described previously, combining a spatially structured ICAR component based on item similarity (e.g., through a k-NN graph $W_p$) and an unstructured component:
$$
\phi_0 \sim \text{BYM}(\tau_s,  \tau_u )
$$
For subsequent time points ($t > 0$), the observation effects evolve according to a discrete-time approximation of an Ornstein--Uhlenbeck (OU) process \citep{PhysRev.36.823}, which is a continuous--time autoregressive process. The effect for item $i$ at time $t$ depends on its effect at the previous time point, $t-1$, and the time elapsed, $\Delta t$:
$$
\phi_{it} \mid \phi_{i, t-1} \sim N(\rho_{ou}^{\Delta t} \phi_{i, t-1}, \sigma_{ou}^2 (1 - \rho_{ou}^{2\Delta t}))
$$
Here, $\rho_{ou}$ is the autocorrelation parameter of the OU process (constrained between 0 and 1), representing the persistence of the effect, and $\sigma_{ou}^2$ is its stationary variance, representing the magnitude of random fluctuations. This temporal structure allows item effects to drift smoothly over time, reflecting potential changes in their underlying state (e.g., a patient's evolving health status). Importantly, the spatial BYM structure (conditional dependence on neighbors) is still enforced at each time step $t$ after the temporal transition. This ensures that the field of item effects remains spatially coherent according to the similarity graph throughout its evolution.

The full hierarchical specification for the dynamic model integrates these components, allowing for simultaneous inference of time-invariant feature effects, potentially correlated feature structure, and smoothly evolving, spatially structured item effects. Priors for the OU parameters $\rho_{ou}$ and $\sigma_{ou}$ are chosen to be weakly informative (e.g., $\rho_{ou} \sim \text{Beta}(a,b)$ and $\sigma_{ou} \sim \text{Half-Normal}$).

\subsection{Prediction}

The structured component of the BYM model is defined on a fixed, finite set of observations, which renders it non-projective. A process is projective if the joint distribution for $N$ data points corresponds to the marginal of the joint distribution for $N+1$ points. In our framework, adding a new item alters the global k--NN graph structure and the resulting precision matrix, requiring a new model definition rather than a simple extension. We deliberately sacrifice projectivity to achieve the significant computational scalability provided by sparse precision matrices.

 However,
we can still efficiently generate predictions for a new, out--of--sample
item, $i^*$, without refitting the model. We employ the
nearest--neighbor interpolation method described by \cite{rue2005},
leveraging the fitted posterior of the latent field. Specifically, we
predict the new item's latent effect, $\phi_{i^*}$, by calculating a
weighted average of the posterior effects of its $k$ nearest
neighbors, $\mathcal{N}(i^*)$. Using inverse-distance weights derived
from the metric $d_{\text{it}}$, the weights are given by: 
\[
w_j = \frac{1/d_{\text{it}}(i^*, j)}{\sum_{l \in \mathcal{N}(i^*)} 1/d_{\text{it}}(i^*, l)} \quad \text{for } j \in \mathcal{N}(i^*).
\]
This interpolation is performed for each sample from the MCMC posterior to generate a full posterior predictive distribution for $\phi_{i^*}$. For each posterior sample $(s)$, we compute:
\[
\phi_{i^*}^{(s)} = \sum_{j \in \mathcal{N}(i^*)} w_j \cdot \phi_{j}^{(s)}.
\]
The final logit--probability for the new item is then $\Lambda_{i^*c}^{(s)} = \phi_{i^*}^{(s)} + \delta_c^{(s)}$, combining the predicted item effect with the sampled feature effect.

This method correctly propagates uncertainty into the predictions and is formally justified by the conditional properties of GMRFs. It provides a fast and practical approach for prediction in large--scale applications where refitting is infeasible.

\section{Simulation Study: False Discovery Rate}

 We assess finite sample properties of inference with the proposed 
model in a simulation study, focusing on
false discovery rate (FDR).
Anticipating the nature of the motivating application, 
the study is designed to evaluate the model's ability to correctly
impute missing data by training it on a partially observed
(``masked'') dataset and assessing the accuracy of its predictions for
the unobserved entries.
Also, for simplicity, we shall refer to items as ``patients'', as it
is the case in the motivating application. 

The simulation proceeds in a series of replications. In each
replication, we first generate a ground-truth dataset. This involves
simulating spatial locations for $I$ patients representing the
correlation induced by baseline covariates, from which we construct a
patient k-nearest neighbors (k-NN) graph and its corresponding graph
Laplacian, $L_p$. We then sample a true patient latent field $\phi$
from the BYM prior using pre-specified hyperparameters for the
structured ($\tau_s$) and unstructured ($\tau_u$) components. A true
vector of condition effects $\delta$ is also sampled. These are
combined to form the true logit-probability surface $\Lambda_{ic} =
\phi_i + \delta_c$, from which a complete binary data matrix
$A_{\text{true}}$ is drawn. 

Next, a training dataset, $A_{\text{train}}$, is created by randomly
masking a fraction of the entries in $a_{\text{true}}$. The BYM model
is then fitted to this incomplete dataset, yielding a posterior
distribution for the latent field $\Lambda$. We use the posterior mean
of $\Lambda$ to make predictions for the 
masked entries. 

A discovery is defined as an instance where the model predicts a probability greater than 0.5 for a masked entry whose true value was 1. 
FDR is then reported as the proportion of predicted ones that were incorrect (i.e., where the true value was 0). This entire process is repeated 100 times to generate a distribution for the FDR, providing an estimate of the model's predictive performance under uncertainty. We estimated an FDR of 37.4\%, which compared with the FDR of 35.9\% of the data--generating processes indicates the model has a good feature discovery performance. The histogram of our estimates is shown in Figure~\ref{fig:fdr}, along with a 2d histogram representing the scatterplot of the true vs estimated posterior mean $\Lambda_{ic}$, which further reinforces the conclusion.

\begin{figure}[t]
    \centering
    \includegraphics[width=0.45\linewidth]{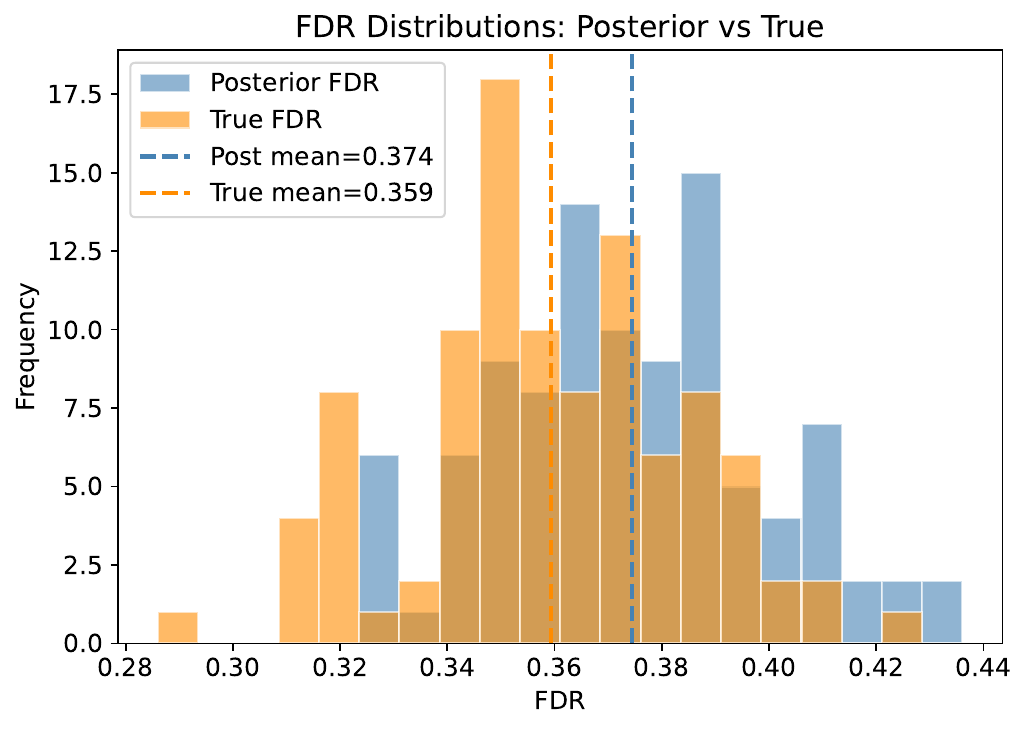}
    \includegraphics[width=0.45\linewidth]{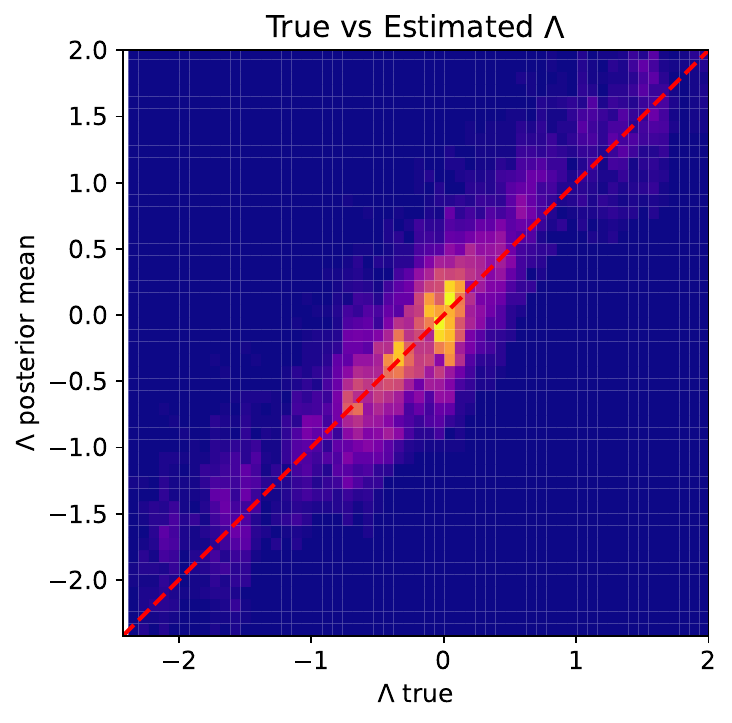}
    \caption{The left plot shows a histogram of the estimated FDR for each simulation using the posterior mean $\Lambda$. The right one is a two--dimensional histogram of $\Lambda$ under the true data generating models and the posterior means.}
    \label{fig:fdr}
\end{figure}

\section{An application to polypharmacy data}\label{sec:app}
\subsection{Polypharmacy data} 
 We apply the proposed model to inference with polypharmacy data.
Polypharmacy, the simultaneous use of multiple
medications, is especially prevalent in patients with chronic
conditions like Human Immunodeficiency Virus (HIV). A primary
challenge in this setting is that a patient's prescription profile may
not fully reflect their underlying health conditions. Medications can
have off-target effects, inadvertently treating the symptoms of a
separate, undiagnosed condition, which makes it 
difficult for clinicians to identify a patient's complete comorbidity
profile. 

We model a patient's underlying condition list as a feature allocation, which we aim to infer from their observed drug prescriptions. The model's task is to learn the latent dependency structure among conditions, allowing us to identify health issues that are otherwise masked by a patient's drug profile.

The indices and variables are presented in Table~\ref{tab:notation}.
\begin{table}[htb]
\centering
\caption{Notation Table}
\label{tab:notation}
\begin{tabular}{l l p{7cm}}
\toprule
Category & Notation & Description \\
\midrule
\multicolumn{3}{l}{Indices} \\
& $i=1,\ldots,I$   & Patient subject index \\
& $t=1, \ldots, T_i$   & Visit index for patient $i$ \\
& $d=1,\ldots,D$   & Drug index \\
& $c=1,\ldots,C$   & Health condition index (known and latent) \\
\midrule
\multicolumn{3}{l}{Latent Variables} \\
& $A_{ic} \in \{0,1\}$ & Latent indicator that patient $i$ has condition $c$. \\
& $\phi_i \in \mathbb{R}$  & Latent effect for patient $i$.\\
& $\delta_c \in \mathbb{R}$  & Latent baseline effect for condition $c$.\\
& $\tau_s, \tau_u, \sigma_\delta$ & Hyperparameters for latent effects. \\
\midrule
\multicolumn{3}{l}{Observed Data and Priors} \\
& $D_i$ & The set of drugs prescribed to patient $i$. \\
&  $B_{dc}$      & Indicator that drug $d$ is prescribed for condition $c$. \\
\midrule
\multicolumn{3}{l}{Model Components} \\
& $d_{\text{pat}}(i,j)$      & Distance measure between patients $i$ and $j$. \\
& $d_{\text{cond}}(c,d)$   & Distance measure between conditions $c$ and $d$.\\
\bottomrule
\end{tabular}
\end{table}

\subsection{Sampling model}

A key challenge in the polypharmacy setting is that the true matrix of patient conditions, $A_{ic}$, is not directly observed. Instead, the available data consists of the drug profiles for each patient, $(D_i)_{i=1}^n$. Our modeling approach therefore treats the condition allocation matrix $A$ as a latent variable to be inferred. The link between this latent structure and the observed data is established through prior knowledge of which drugs are prescribed for which conditions.

First, we construct the set of $C$ possible health conditions that
form the features of our model. This set is grounded by identifying
medications that are uniquely or primarily linked to a single condition
according to established clinical knowledge bases.
 We use the the Anatomical Therapeutic Chemical (ATC) system. 
For instance, a 
drug used exclusively for treating Hepatitis B
essentially reveals this condition. 
The presence of such a drug in a patient's
profile provides strong evidence for the corresponding condition,
helping to ground the inference of the latent space. 

The model connects the observed drug profiles $D_i$ to the latent condition allocations $A_{ic}$ via a likelihood based on a known drug--condition mapping, $B_{dc}$. We assume that if a patient has a condition ($A_{ic}=1$), they are prescribed drugs for it, and the probability of receiving any particular drug $d$ appropriate for that condition is uniform across the set of suitable drugs. To improve mixing we assume a small probability $\epsilon>0$ of being prescribed a drug not indicated for said condition. This forms the observational model.

The latent structure is governed by the hierarchical BYM prior described in Section~5. The logit-probability surface $\Lambda$ determines the probability of the latent condition allocations $A$. The structure of the BYM's spatial component is defined by a patient neighborhood graph, $\mathcal{N}_p$, constructed using a k-nearest neighbors (k-NN) algorithm from the patient distance metric $d_{\text{pat}}(i,j)$. This distance is based on patient covariates and drug profile similarity.

Now, for the condition effects $\delta_c$ we assume a Gaussian prior $N(0, \Sigma_c)$, where $\Sigma_{cc}=\sigma^2_c\sim \text{Half--Normal}(1)$ and $\Sigma_{cc'}=\xi \rho_c\rho_{c'}[\sigma^2_c\sigma^2_{c'}]$, that is, the correlation between condition $c$ and $c'$ is given by the product of a global latent factor $\xi\sim U(-1,1)$ and two individual factors $\rho_c, \rho_{c'}\sim TN(0, 0.01)$, where $TN$ represents a $[-1,1]$--truncated normal. The variance components are given a Half--Normal distribution with variance equal to 1.

The final hierarchical model is specified as:

\begin{align}\label{eq:model}
P(d\in D_i\mid A_{ic}=1)&\overset{\text{ind}}{\propto} \begin{cases} \frac{1-\epsilon}{\sum_{d'} B_{d'c}}\quad &\text{if } B_{dc}=1\\
\epsilon \quad &\text{if } B_{dc}=0 \end{cases} \text{ for all } i,c \\
A_{ic} \mid \phi_i, \delta_c &\sim \text{Bernoulli}(\text{logit}^{-1}(\phi_i + \delta_c)) \nonumber \\
\phi \mid \tau_s, \tau_u &\sim \text{BYM}(\tau_s,\tau_u) \nonumber \\
\delta_c \mid \Sigma_\delta &\overset{\text{iid}}{\sim}  \mathcal{N}(0, \Sigma) \nonumber \\
\Sigma_{cc'} =\xi \rho_c\rho_{c'}[\sigma^2_c\sigma^2_{c'}]\quad &\&\quad
\Sigma_{cc} \sim\text{Half--Normal}(1)\nonumber \\ 
\xi &\sim U(-1,1),\; \rho_c \overset{\text{iid}}{\sim} TN_{[-1,1]}(0,0.01)\nonumber \\
 \tau_s, \tau_u, \sigma_\delta &\sim \text{Half-Cauchy}(2) \nonumber
\end{align}

\subsection{Inference}

A Bayesian treatment of the probability surface in Equation \eqref{eq:model} presents a significant computational challenge. The joint posterior distribution of the latent variables and parameters is high-dimensional and has a complex geometry due to the hierarchical structure of the model. For large datasets, standard MCMC methods like Hamiltonian Monte Carlo (HMC) \citep{Neal2011} become computationally prohibitive, as they require processing the entire dataset for each gradient evaluation.

To overcome this, we employ Wasserstein Consensus Monte Carlo \citep{dunson2015}, a divide-and-conquer strategy designed for scalable Bayesian inference. The core idea is to partition the full dataset $\mathcal{D}$ into $M$ disjoint subsets or shards, $\{\mathcal{D}_1, \ldots, \mathcal{D}_M\}$. We then run an MCMC algorithm (such as HMC) independently and in parallel on each shard, making it more computationally efficient.

This process yields $M$ sets of posterior samples, each corresponding to a sub-posterior distribution $p(\theta | \mathcal{D}_m)$. The key challenge then lies in combining these sub-posteriors to form an accurate approximation of the true full-data posterior, $p(\theta | \mathcal{D})$. This method accomplishes this by computing the Wasserstein barycenter, that is, the measure that minimizes the 2-Wasserstein distance between all the sharded posteriors.

The inference process is as follows:
\begin{enumerate}
    \item Partition the data $\mathcal{D}$ into $M$ shards, $\{\mathcal{D}_1, \ldots, \mathcal{D}_M\}$.
    \item For each shard $m \in \{1, \ldots, M\}$, run an independent MCMC chain to draw samples from the sub-posterior $p(\theta | \mathcal{D}_m)$.
    \item Combine the samples from all sub-posteriors by finding the Wasserstein barycenter, which represents the consensus posterior distribution.
\end{enumerate}

\subsection{Results}

We now present the inference results for the polypharmacy data. We split the data into 14 shards of 500 patients each and ran four parallel chains on each shard using the No-U-Turn sampler (NUTS) implementation from Numpyro \citep{phan2019composable}. Each chain was run for 5000 iterations after a burn-in of 10,000. Further, we use 10 neighbors for each patient in the ICAR component of the prior.

After running the chains we take the barycenter of the model parameters as described in Section 7.2 and recompute the probability surface $\Lambda$. The confusion matrix in Table~\ref{tab:conf}  shows a good classification performance for negative cases (zeros), correctly identifying when a patient does not have a certain condition. The model recovered roughly 56\% of the positive cases (ones). The ground truth was determined by assuming that if a patient takes no prescriptions indicated for a condition then they do not have it, and that assuming they do have it if they take a drug uniquely prescribed for the condition. We also show in Table~\ref{tab:conds} the imputed conditions for three randomly chosen patients.

\begin{table}[h]
    \centering
    \begin{tabular}{l rr}
        \toprule
        & \multicolumn{2}{c}{\textbf{Predicted}} \\
        \cmidrule(l){2-3}
        \textbf{True Class} & \multicolumn{1}{c}{0} & \multicolumn{1}{c}{1} \\
        \midrule
        0 & 0.944 & 0.056 \\
        1 & 0.440 & 0.560 \\
        \bottomrule
    \end{tabular}
        \caption{Confusion matrix based on mean posterior probabilities.}
            \label{tab:conf}
\end{table}

We analyzed the posterior distributions for the condition-specific baseline effects, $\delta_c$. Figure~\ref{fig:betas} shows the posterior mean and the 95\% credible interval for each effect. These parameters represent the overall prevalence of each condition, adjusted for patient similarity. A large positive $\delta_c$ indicates a condition that is common across the population, while a large negative value suggests a rare condition.

\begin{table}[ht]
\centering
\begin{tabular}{|c|p{5cm}|c|}
\hline
\textbf{Patient Index} & \textbf{Original Conditions} & \textbf{Imputed Conditions} \\
\hline
5732 & Anxiety, Depression, Diabetes, HIV, Hyperlipidaemia, Hypertension & Pain (0.53) \\
\hline
6477 & Benign prostatic hyperplasia, HIV, Hyperlipidaemia, Hypertension, Ischaemic heart disease: hypertension, Pain & Depression (0.60) \\
\hline
6426 & Allergies, Anxiety, Chronic airways disease, Congestive heart failure, Diabetes, HIV, Hyperlipidaemia, Hypertension, Ischaemic heart disease: angina, Pain & Depression (0.79) \\
\hline
\end{tabular}
\caption{Selected patients with at least one imputed condition ($p > 0.5$). The original conditions are as recorded in the ground truth; the imputed condition(s) are those predicted by the model at or above the threshold along with their probabilities.}
\label{tab:conds}
\end{table}


\begin{figure}[ht]
    \centering
    \includegraphics[width=\linewidth]{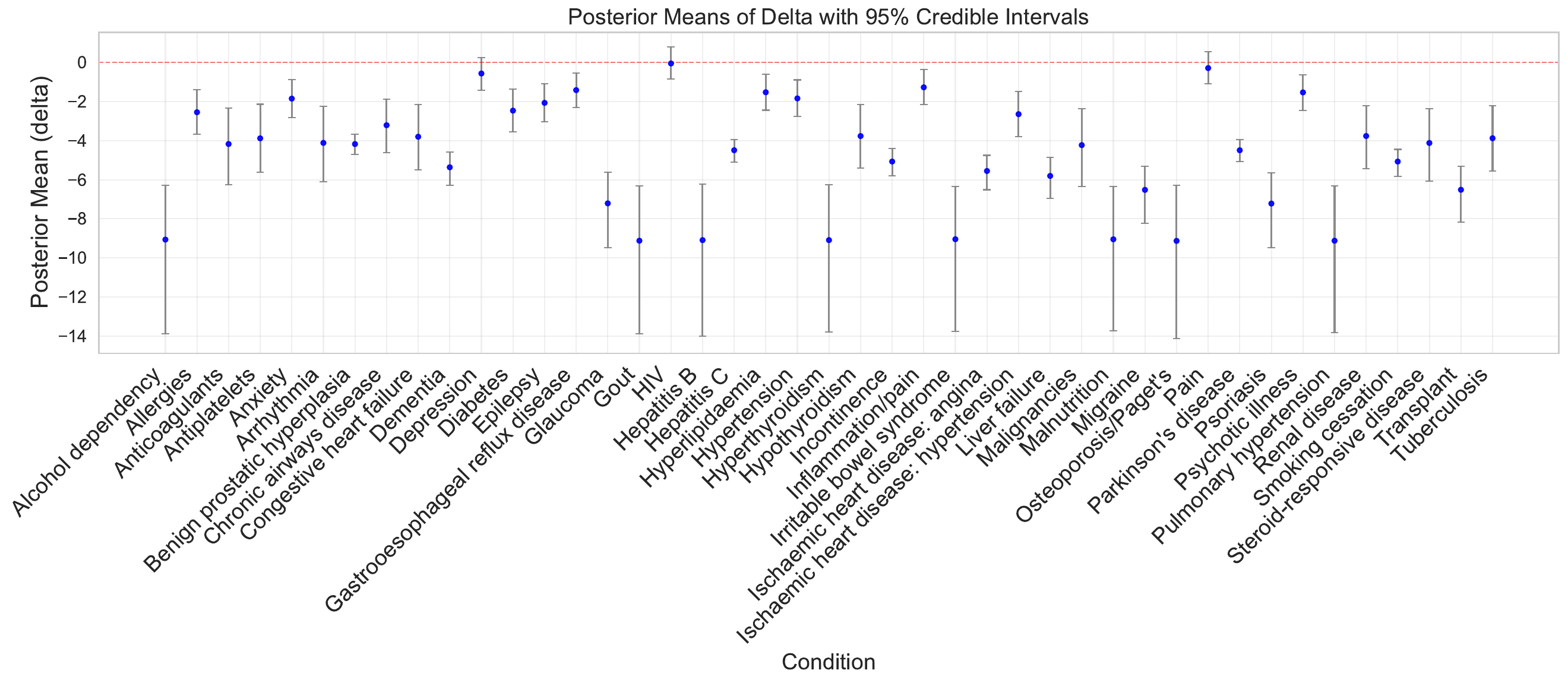}
    \caption{Posterior distributions for the condition baseline effects ($\delta_c$)). The horizontal red line indicates 0.}
    \label{fig:betas}
\end{figure}


The dataset contained records of several follow up visits for each patient with possible missing appointments. Thus we also implemented the time-dynamic model described in Subsection~\ref{sub:temp}. This model extends the static framework by allowing the patient-specific effects, $\phi_{it}$, to evolve according to a discretely--observed Ornstein-Uhlenbeck diffusion. In this process, the patient effect at a given time point, $\phi_{it}$, is modeled as a normal distribution centered on its value from the previous time step, $\phi_{i, t-1}$, scaled by an autocorrelation parameter $\rho_{ou}$. This allows each patient's latent effect to drift smoothly over time, capturing dynamic changes while still being constrained by the spatial BYM structure at each time step.

\begin{figure}[ht]
    \centering
    \includegraphics[width=0.42\linewidth]{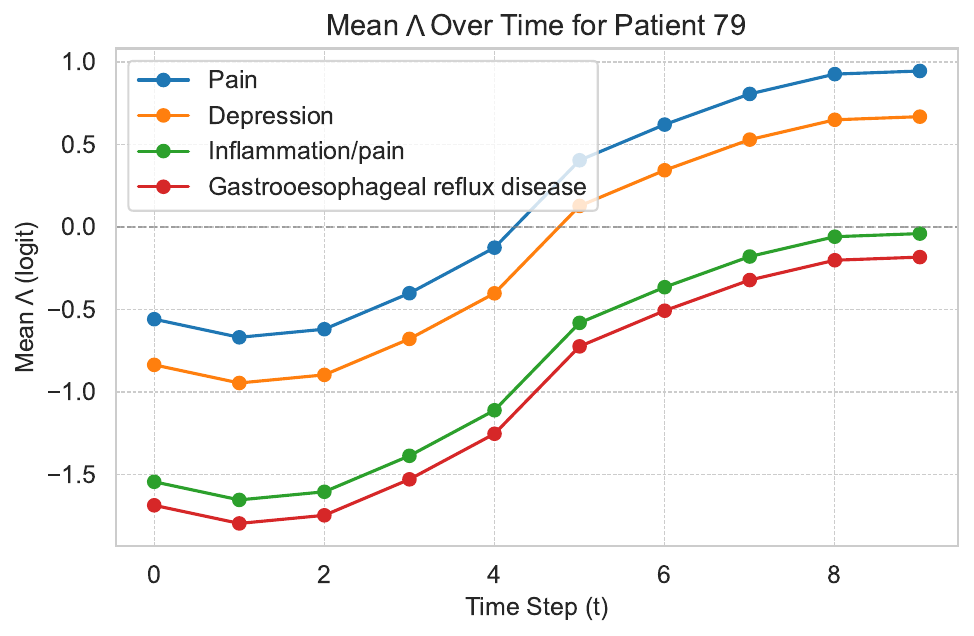}
    \includegraphics[width=0.42\linewidth]{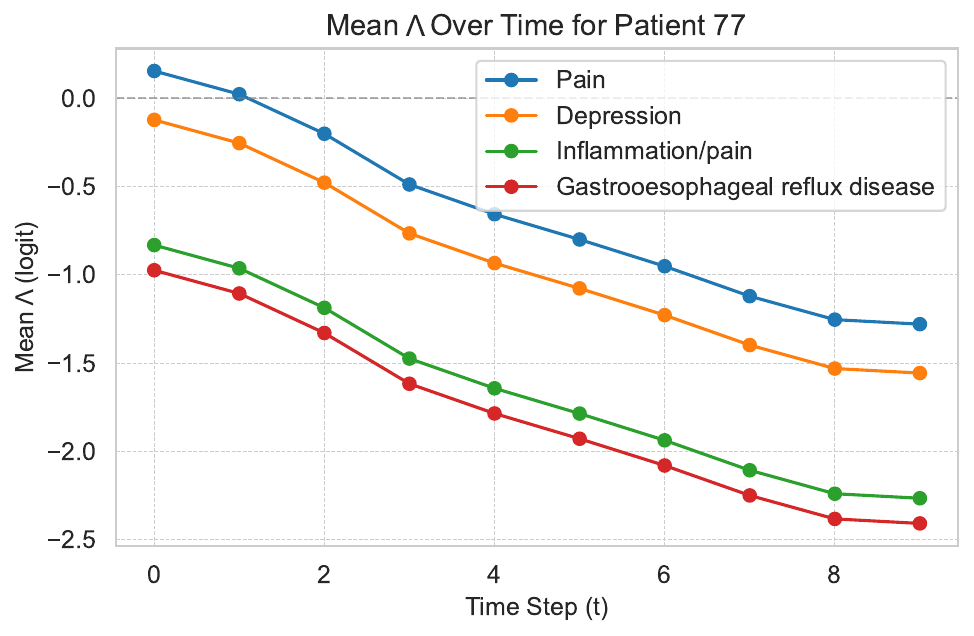}
    \includegraphics[width=0.42\linewidth]{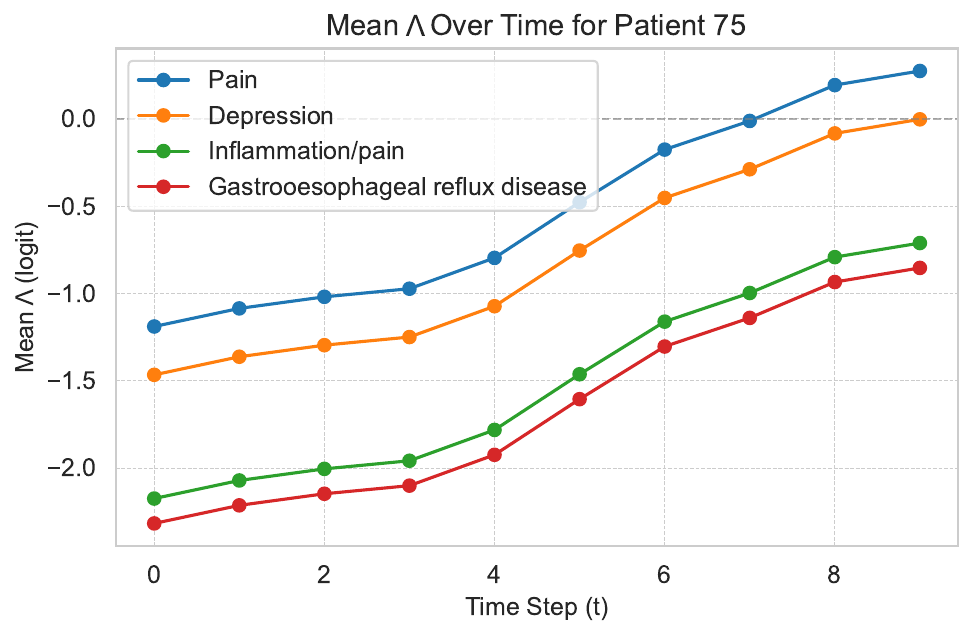}
    \caption{Probability trajectories for three randomly chosen patients for four conditions.}
    \label{fig:traj}
\end{figure}

We specified weakly informative priors for the hyperparameters. The precision parameters for the BYM spatial structure were given $\tau_s \sim \text{HalfCauchy}(2)$ for the structured component and $\tau_u \sim \text{HalfCauchy}(2)$ for the unstructured component. For the OU process we set the autocorrelation parameter $\rho_{ou} \sim \text{Beta}(2, 2)$ and the stationary standard deviation $\sigma_{ou} \sim \text{HalfNormal}(1.0)$. Similarly to the static model, the time-invariant condition effects $\delta$ were modeled jointly with a $\text{MultivariateNormal}(0, \Sigma_\delta)$ prior. This covariance matrix $\Sigma_\delta$ was constructed using the same low--rank structure parameterized by a global scale $\sigma_\delta \sim \text{HalfNormal}(1.0)$, a correlation parameter $\rho \sim TN_{[-1,1]}(0,0.01)$, and a latent factor vector $\xi\sim  \text{Uniform}(-1.0, 1.0)$.

The full hierarchical model for the time--varying patient effects $\phi_t$ is defined as a spatio-temporal GMRF. The prior for the initial state $\phi_0$ is a standard BYM model. The prior for subsequent states $\phi_t$ is a product of the temporal OU transition and the same static BYM potential, which ensures spatial coherence at every time step. 

\begin{align*} 
\phi_0 \mid \tau_s, \tau_u &\sim N(0, Q_0^{-1}) \\
\text{where } Q_0 &= \tau_s L_p + (1 + \tau_u) I_I \\
\text{For } t \geq 1, \quad p(\phi_t \mid \phi_{t-1}) &\propto {\text{OU}}(\phi_t \mid \phi_{t-1}) \otimes p_{\text{spatial}}(\phi_t) \\
\text{where } {\text{OU}}(\phi_t \mid \phi_{t-1})&= N(\phi_t \mid \rho_{ou}^{\Delta t} \phi_{t-1}, \sigma_{dt}^2 I_I) \\
p_{\text{spatial}}(\phi_t) &= \exp\left(-\frac{1}{2} \phi_t' (\tau_s L_p + \tau_u I_I) \phi_t\right) \\
\text{and } \sigma_{dt}^2 &= \sigma_{ou}^2 (1 - \rho_{ou}^{2\Delta t})
\end{align*}

In Figure~\ref{fig:traj}, we show the trajectories for the mean probabilities of three different patients having the top 
 four imputed conditions: pain, depression, reflux, and inflammation. The almost parallel trends indicate these four conditions are strongly correlated with each other.

\section{Discussion}

In this work, we introduced a parametric approach to feature allocation modeling by reconceptualizing it as a spatial problem. By mapping the latent feature paintbox onto a continuous probability surface, we take advantage the rich toolkit of hierarchical spatial modeling to explicitly capture complex dependency structures. Central to this framework is the use of a Besag, York, and Mollié (BYM) decomposition. This approach allows us to rigorously separate the signal-—structured spatial correlations driven by item similarity-—from unstructured heterogeneity. This modularity offers a distinct advantage over traditional non--parametric models, overcoming their often-opaque nature by providing interpretable parameters and straightforward extensibility while maintaining computational tractability through scalable inference. We demonstrated the practical power of this framework through an application to polypharmacy, where the model successfully utilized these latent structures to unmask hidden health conditions within complex, real--world patient profiles.

\section{Appendix I}

The Anatomical Therapeutic Chemical (ATC) code is a classification method for medical drugs, maintained by the World Health Organization Collaborating Centre for Drug Statistics Methodology. It has five levels corresponding to:

\begin{itemize}
    \item 1st level: The system has fourteen main anatomical or pharmacological groups.
    \item 2nd level: Pharmacological or Therapeutic subgroup.
    \item 3rd \& 4th levels: Chemical, Pharmacological or Therapeutic subgroup.
    \item 5th level: Chemical substance.
\end{itemize}

For example, metformin is classified within the ATC system as \texttt{A10BA02}. The first character, \texttt{A}, represents the anatomical main group ``Alimentary tract and metabolism'' (1st level). The next two characters, \texttt{10}, denote the therapeutic subgroup ``Drugs used in diabetes'' (2nd level). Further subdivision to \texttt{A10B} indicates the pharmacological subgroup ``Blood glucose lowering drugs, excluding insulins'' (3rd level). The addition of \texttt{A} (making \texttt{A10BA}) specifies the chemical subgroup ``Biguanides'' (4th level). Finally, the code \texttt{A10BA02} corresponds specifically to the chemical substance metformin (5th level).

\bibliographystyle{ba}
\bibliography{FA}
\end{document}